\newtheorem{theorem}{\bf Theorem}
\newtheorem{proposition}{\bf Proposition}
\newtheorem{definition}{\bf Definition}
\newtheorem{remark}{\bf Remark}
\newcommand{\Rmnum}[1]{\expandafter\@slowromancap\romannumeral #1@}
\begin{document}
\title{Price Discrimination for Energy Trading in Smart Grid: A Game Theoretic Approach}
\author{Wayes~Tushar,~\IEEEmembership{Member,~IEEE,}~Chau~Yuen,~\IEEEmembership{Senior Member,~IEEE,}~David B. Smith,~\IEEEmembership{Member, IEEE,} and~H.~Vincent~Poor,~\IEEEmembership{Fellow,~IEEE}
\thanks{W. Tushar and C. Yuen are with Singapore University of Technology and Design (SUTD), 8 Somapah Road, Singapore 487372. (Email: \{wayes\_tushar, yuenchau\}@sutd.edu.sg).}
\thanks{D. B. Smith is with the National ICT Australia (NICTA), Australian Technology Park, Eveleigh, NSW 2015, Australia. (Email: david.smith@nicta.com.au).}
\thanks{H. V. Poor is with the School of Engineering and Applied Science at Princeton University, Princeton, NJ 08544, USA. (Email: poor@princeton.edu).}
\thanks{This work is supported in part by the Singapore University of Technology and Design (SUTD) through the Energy Innovation Research Program (EIRP) Singapore NRF2012EWT-EIRP002-045, and in part by the U.S. National Science Foundation under Grant ECCS-1549881.}
\thanks{D. B. Smith's work is supported by NICTA. NICTA is funded by the Australian Government through the Department of Communications and the Australian Research Council through the ICT Centre of Excellence Program.}
}
\IEEEoverridecommandlockouts
\maketitle
\begin{abstract}
Pricing schemes are an important smart grid feature to affect typical energy usage behavior of energy users (EUs). However, most existing schemes use the assumption that a buyer pays the same price per unit of energy to all suppliers at any particular time when energy is bought. By contrast, here a discriminate pricing technique using game theory is studied. A cake cutting game is investigated, in which participating EUs in a smart community decide on the price per unit of energy to charge a shared facility controller (SFC) in order to sell surplus energy. The focus is to study fairness criteria to maximize sum benefits to EUs and ensure an envy-free energy trading market. A benefit function is designed that leverages generation of discriminate pricing by each EU, according to the amount of surplus energy that an EU trades with the SFC and the EU's sensitivity to price. It is shown that the game possesses a socially optimal, and hence also Pareto optimal, solution. Further, an algorithm that can be implemented by each EU in a distributed manner to reach the optimal solution is proposed. Numerical case studies are given that demonstrate beneficial properties of the scheme.
\end{abstract}
\begin{IEEEkeywords}
Smart grid, cake cutting game, shared facility, discriminate pricing, social optimality, Pareto optimality.
\end{IEEEkeywords}
 \setcounter{page}{1}
\section{Introduction}\label{sec:introduction}
\IEEEPARstart{O}{ne} of the main stimuli behind adopting energy management in smart grid is the use of different pricing schemes, in which an energy entity changes the price of per unit electricity according to the generation and demand so as to motivate users to modify their attitudes towards electricity consumption and supply~\cite{Fang-J-CST:2012,YangCao:2012,LiangYu:2015}. Particularly, with the advancement of distributed energy resources (DERs), different pricing techniques can assist the grid or other energy entities, such as shared facility controllers (SFCs)\footnote{An SFC, as we will see in the next section, is an entity that is responsible for managing energy of different shared facilities of a smart community.} to operate reliably and efficiently by obtaining some energy supply from the energy users (EUs)~\cite{Tushar-globecom:2014}.

Over the past few years there has been significant interest in devising pricing schemes for energy management in smart grid. These schemes can be classified into three general categories: time-of-use pricing; day-ahead pricing; and real-time pricing~\cite{Dong-TSG:2013}. Time-of-use pricing has three different pricing rates: peak, off-peak and shoulder rates based on the time of electricity use by the EUs~\cite{Asano-TPS:1992}. Day-ahead pricing is determined by matching offers from generators to bids from EUs in order to develop a classic supply and demand equilibrium price at an hourly interval~\cite{Torre-TPS:2002}. Finally, real-time pricing~\cite{Dong-TSG:2013} refers to tariffed retail charges for delivering electric power and energy that vary over hour-to-hour, and are determined through an approved methodology from wholesale market prices. Other pricing schemes that have been discussed in the literature include critical peak pricing, extreme day pricing, and extreme day peak pricing~\cite{Dong-TSG:2013}. Discussion of various pricing schemes for energy management in smart grid can be found in \cite{Jiang-TSG:2014,Ma-TCST:2014,Yoon-TSG:2014,Can-Wan-TSG:2014,Peng_wang-TSG:2014,M-Rahimiyan-TPS:2014,Zhi-yu-Xu-TPS:2014,Tushar-TSG:2013} and the references therein. Nevertheless, an important similarity in most of these pricing schemes is that all of the EUs decide on the same selling price per unit of energy at a particular time.

With the increase in government subsidies for encouraging the use of renewables, more EUs with DERs are expected to be available in the future~\cite{Liu-STSP:2014,Naveed-Energies:2013,YiLiu-TIE:2015,Naveed-TSG:2015,Naveed-Elsevier:2016}. This will subsequently lead to a better completion of purchasing targets for SFCs in order to maintain electricity for shared facilities in a community~\cite{Tushar-TIE:2014}. This is due to the fact that the opportunity of an SFC to trade electricity with EUs can greatly reduce its dependency on the grid, and consequently decrease the cost of energy purchase. However, not all EUs would be interested in trading their surplus energy if the benefit from the SFC is not attractive~\cite{Tushar-globecom:2014}. In particular, as we will see shortly, this can happen to EUs with limited energy surplus and/or with higher sensitivity to price whose respective return could be very small. Nevertheless, as shown in \cite{Tushar-globecom:2014}, one possible way to address this problem is that these EUs can sell their energy at a relatively higher price per unit of energy, within a reasonable margin, compared to EUs with very large DERs (and/or, with lower sensitivity to the choice of price) without affecting their revenue significantly.

It is natural to think that the benefit to the end-user will increase if the price for selling each unit of energy increases. However, we note that in an energy market with a large number of sellers the buyer has many choices to buy the electricity. Hence, a significantly higher price per unit of energy from a seller can motivate the buyer to switch its buying from that expensive option to a seller who is selling at a comparatively cheaper rate~\cite{Wayes-J-TSG:2012}. Thus, even with a higher selling price per unit of energy, the net benefit to a user may decrease significantly if the amount of energy that it can sell to the buyers, i.e., the SFC in this case, becomes very small. This type of phenomenon has occurred recently in the global oil market~\cite{oil_price:2015}. This can further be illustrated by a toy example as follows.

Consider the numerical example given in Table \ref{table:motivation} where EU1 and EU2 sell their surplus energy to the SFC to meet the SFC's $40$ kWh energy requirement. It is considered that EU1 and EU2 have DERs with capacity of $50$ kWh and $10$ kWh respectively (and EU1 is significantly larger than EU2 in terms of available energy to supply). In case 1, EU1 and EU2 sell $35$ and $5$ kWh of energy to the SFC at a price of $20$ cents/kWh. Hence, the revenues of EU1 and EU2 are $700$ and $100$ cents respectively and the total cost to the SFC is $800$ cents. In case 2, EU1 and EU2 choose their prices differently and sell their surplus energy at rates of $18$ cents/kWh and $22$ cents/kWh respectively. Now, due to the change of price in case 2, if EU1 reduces its selling amount to $32$ (since the price is reduced) and EU2 increases its surplus amount to $8$ kWh (as the current price is high) the resulting revenue changes to $576$ and $176$ cents respectively for EU1 and EU2 whereas the total cost to the SFC reduces to $752$ cents. Thus, according to this example, it can be argued that discriminate pricing is  considerably beneficial to EUs with small energy (revenue increment is $76\%$) at the expense of  relatively lower revenue degradation (e.g., $17\%$ in case of EU1) from EUs with larger energy capacity. It further reduces the cost to the SFC by $6\%$.

However, one main challenge for such price discrimination among different EUs, which is not discussed in \cite{Tushar-globecom:2014} and yet needs to be explored, is the maintenance of fairness of price distribution between different EUs to enable such schemes to be sustained in electricity trading markets. For example, if the EUs are not happy with the price per unit of energy that they use to sell their surplus energy, or if they envy each other for the adopted discrimination, energy markets that practice such discriminate pricing schemes would eventually diminish. Hence, there is a need for solutions that can maintain the price disparity between EUs in a fair manner, whereby considering their available surplus energy and their sensitivity to change of price, an envy-free energy trading environment with a view to obtain a socially optimal\footnote{A socially optimal solution maximizes the sum of benefits to all EUs in the smart grid network~\cite{Yudovina:2014}.} energy management solution is ensured.

To this end, \emph{this paper complements the existing pricing schemes in the literature by studying the fairness of selecting different prices for different EUs in smart grid.} However, unlike \cite{Tushar-globecom:2014}, where a two-stage Stackelberg game is studied, we take a different approach in this paper. Particularly, we explore a \textbf{cake-cutting game (CCG)} for selecting discriminate prices for different users. In the proposed CCG, the EUs with smaller available energy can decide on a higher unit price, and the price is also adaptive to the sensitivity of EUs to the choice of price. A suitable benefit function is chosen for each of the EUs that enables the generation of discriminate pricing so as to achieve a \emph{socially optimal} solution of the game. Thus, also, Pareto optimality is directly implied by this socially optimal solution, and hence an envy-free energy trading market is established. We propose an algorithm that can be adopted by each EU in a distributed manner to decide on the price vector by communicating with the SFC, and the convergence of the algorithm to the optimal solution is demonstrated. Finally, we present some numerical case studies to show the beneficial properties of the proposed discriminate pricing scheme.

We stress that the current grid system does not allow such discriminate pricing among EUs. Nonetheless, the idea of price discrimination is not new and has been extensively used in economic theory. For instance, the effect of price discrimination on social welfare is first investigated in 1933~\cite{Robinson:1933}, which is further extended with new results in \cite{Schmalensee:1981} and \cite{Varian:1985}. In recent years, the authors in \cite{Achim:2014} study the airport congestion pricing technique when the airline carriers discriminate with respect to price.  In \cite{Grennan:2012}, the authors use a new panel of data on buyer-supplier transfers and build a structural model to empirically analyze bargaining and price discrimination in a medical device market. The study of intertemporal price discrimination between consumers who can store goods for future consumption needs is presented in \cite{Hendel:2013}, and the effects of third-degree price discrimination on aggregate consumers surplus is considered in \cite{Cowan:2012}. Further, a framework for flexible use of cloud resources through profit maximization and price discrimination is studied in \cite{Tsakalozos:2011}. In this context, we also envision discriminate pricing as a further addition to real-time pricing schemes in future smart grid. Such a scheme is particularly suitable for the energy trading market when the SFC may want to reduce its dependence on a single dominant user. For example, in the toy example the SFC may rely heavily on User $1$ who has a large surplus for the same price model. However, by giving more incentive to User $2$, the SFC managed to reduce its dependence on User $1$ by buying less energy from it. Please note that this could happen in a real-world scenario in which a buyer pays a small supplier a relatively higher price in order to help the small supplier grow, and at the same time to reduce the dependence on a single big supplier. Such trading will prevent the possibility of the big supplier growing too big and creating a monopoly, which could lead to a serious problem in the long run. Please note that examples of such differentiation can also be found in current standard Feed-in-Tariff (FIT) schemes~\cite{2012-solarchoice}.

\begin{table}[t!]
\caption {Numerical example of a discriminate pricing scheme where an SFC requires $40$ kWh of energy from two EUs and the SFC's total price per unit of energy to pay to the EUs is $40$ cents/kWh.}
\centering
\begin{tabular}{|c||c|c|}
\hline
& \textbf{Case 1} & \textbf{Case 2}\\
\hline
Payment to EU1 (cents/kWh) & $20$ & $18$\\
\hline
Payment to EU2 (cents/kWh)& $20$ & $22$\\
\hline
Energy supplied by EU1 (kWh) & $35$ & $32$\\
\hline
Energy supplied by EU2 (kWh) & $5$ & $8$\\
\hline\hline
\textbf{Revenue of EU 1} (cents) & $700$ & $576$ (-$17\%$)\\
\hline
\textbf{Revenue of EU 2} (cents) & $100$ & $176$ (+$76\%$)\\
\hline
\textbf{Cost to the SFC} (cents) & $800$ & $752$ (-$6\%$)\\
\hline
\end{tabular}
\label{table:motivation}
\end{table}

The remainder of the paper is organized as follows. The system model is described in Section \ref{sec:system-model} and the problem of price discrimination is studied as a CCG in Section \ref{sec:game}. We provide numerical case studies to show the beneficial properties of the proposed scheme in Section \ref{sec:Case Study}. Finally, we draw some concluding remarks in Section \ref{sec:Conclusion}.
\section{System Description}\label{sec:system-model}
\begin{figure}[t]
\centering
\includegraphics[width=\columnwidth]{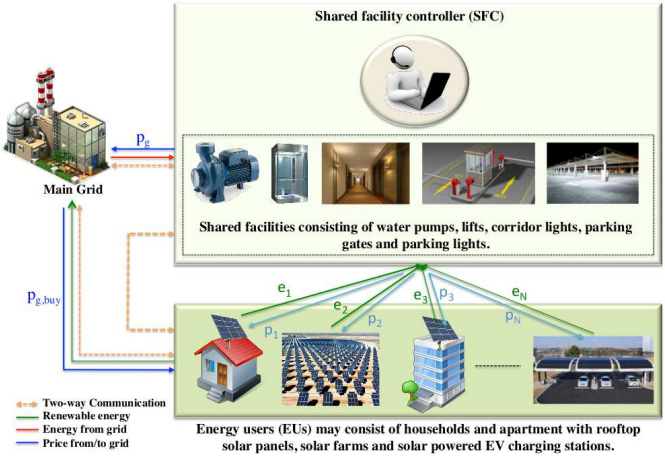}
\caption{Demonstration of the elements of the considered system model and the direction of the flow of power and information within the system.} \label{fig:SystemModel}
\end{figure}
Consider a smart grid system consisting of $N$ EUs, where $N = |\mathcal{N}|$, an SFC and a main grid. Each EU $n\in\mathcal{N}$ can be considered as a single user or a group of users connected via an aggregator that acts as a single entity. Each EU is equipped with DERs such as solar panels and wind turbines and does not have any storage facility\footnote{An example of such a system is a grid-tie solar system without storage device~\cite{GridTie:2015}.}. Hence, EU $n$ needs to sell its surplus energy, if there is any, either to the SFC or to the main grid after meeting its essential demand in order to make some extra revenue. Due to the fact that the buying price $p_{g,\text{buy}}$ of a main grid is considerably lower than pricing within a facility~\cite{McKenna-JIET:2013}, it is reasonable to assume that the EUs would be more keen to sell their surplus to the SFC instead of to the grid\footnote{For example, in the state of Queensland in Australia, the selling price of electricity is $16.262$ cents/kWh during off peak hours (which is almost double during peak hours) \cite{Queensland:2015}, whereas the buying price of electricity is $6.348$ cents/kWh under Queensland's Feed-in Tariff scheme \cite{QueenslandFeed-in-Tariff:2015}.}. Alternatively, if the payment from the SFC is not sufficiently attractive to any EU, the EU may schedule its equipment for extra consumption or may choose to sell to the grid instead of selling to the SFC. The SFC, on the other hand, controls the electricity consumption of equipment and machines that are shared and used by the EUs on a regular basis, and does not have any energy generation capacity. Therefore, the SFC relies on EUs and the grid for its required electricity. Essentially, the SFC is interested in buying  as much energy as possible from the EUs as the buying price per unit of energy from the main grid is significantly higher. All the EUs, the main grid and the SFC are connected to each other through power and communication lines~\cite{Fang-J-CST:2012}. A schematic representation of the considered system model is shown in Fig.~\ref{fig:SystemModel}.

To this end, let us assume that at a particular time during a day each EU $n\in\mathcal{N}$ has an energy surplus of $e_n$ that it wants to sell either to the SFC or to the grid with a view to make extra revenue. Each EU $n$ charges the SFC a price $p_n$ per unit of energy for selling its $e_n$. $p_n,~\forall n$ do not need to be equal to each other and can be varied according to the amount of surplus at each EU and the EU's sensitivity $\alpha_n>0$ to the choice of price. The SFC, on the other hand, wants to buy this energy $e_n$ from each EU $n$ in order to meet the demands of the shared facilities of the community. We assume that the SFC has a budget constraint $C$, and hence, the sum of what the SFC needs to pay to all EUs needs to satisfy,
\begin{eqnarray}
\sum_n e_n p_n\leq C.\label{eqn:budget-price}
\end{eqnarray}
Please note that such a budget is necessary to prevent the EUs from increasing their selling price per unit of energy considerably and thus maintaining market competitiveness, which may arise due to allowing EUs to decide on the selling price through interacting with the SFC\footnote{As we will see later, EUs and the SFC interact with each other in the proposed scheme to decide on the price vector.}. This budget also enables us to decouple the SFC's decision making process of buying energy from the EUs (which is the main focus of this work) from the problem of the SFC's buying energy from the grid. However, the budget, which facilitates price discrimination in the proposed scheme, needs to be chosen such that it is always lower than the total price that the SFC needs to spend buying the same amount of energy from the grid and thus always benefits the SFC in terms of reducing cost. This is due to the fact that if the total money that the SFC pays to the EUs becomes equal to (or greater than) the amount that it needs to pay the grid, the SFC will not be encouraged to buy energy from the EUs as the SFC can buy all its required energy from the grid independently. Nevertheless, a suitable value for a budget $C$ may depends on many other factors, e.g., how willing is a user to sell its surplus to the SFC, which requires human behavioral models to obtain reasonable estimates. For example, as a rational entity, each EU $n$ would be interested in charging the SFC as much price per unit of energy as possible for selling its surplus. However, on the one hand, a very high price may discourage the SFC from trading any energy with the EU and rather motivate the SFC to buy its energy from the grid. On the other hand, if the selling price of an EU is too small, this may compel the EU to withdraw its surplus from the energy market as the expected revenue from energy trading with the SFC would be significantly lower. Nonetheless, in this paper we consider a general setting, i.e., a budget $C$ for the SFC when it buys energy from the EUs.

Now, the choice of price $p_n$ by an EU $n$ is also restricted by its sensitivity to the choice of price. For example, as discussed in Table~\ref{table:motivation}, a smaller price may not affect an EU with a very large surplus (i.e., lower sensitivity), but it can significantly alter the total revenue of an EU with lower available energy (higher sensitivity). In this context, it is considered that the price $p_n$ per unit of energy that an EU $n$ asks from the SFC depends not only on the available energy $e_n$ to the EU but also on the the EU's sensitivity $\alpha_n$ to the choice of price, which is chosen motivated by the use of preference/reluctance parameters in \cite{Tushar-TIE:2014}, \cite{Wayes-J-TSG:2012}, and \cite{ZFan_Conf:2011,Samadi-C-Smartgridcomm:2010,chaibo-TSG:2014}. $\alpha_n$ captures the sensitivity of each EU's benefit from changing the per unit price and thus is used to quantify the different types of players. For example, if one user has a large amount of surplus, he may want to sell the energy at a relatively cheaper price compared to a player who has a small surplus to sell. This is due to the fact that players with larger amount of surplus energy might be more interested in selling all the energy for a higher gain and thus will be flexible in reducing their price for selling more energy. Hence, a relatively lower price may not affect their revenues significantly \cite{Tushar-TSG:2013}. On the other hand, a user with a small energy surplus will not be more keen to sell energy unless the price per unit of energy is considerably higher as otherwise the expected return will be very small. Hence, the evaluation of a change in price per unit of energy as well as the willingness to increase the price may not be same to both the players. We capture this aspect through a parameter $\alpha_n$, which is multiplied by the price per unit of energy (i.e., to capture the fact that a similar price may be interpreted differently by different EUs).

Also, as a rational entity each EU $n$ wants to increase the price $p_n$ per unit of energy that it charges the SFC as much as possible. However, the maximum price $P_n$ chosen by each $n$ needs to be  such that it does not exceed the grid's selling price $p_g$ per unit of energy. For example, if $P_n$ is greater than grid's selling price $p_g$, clearly the SFC will not buy any energy from the EU. To this end, each EU $n$ may want to increase $p_n$ to a maximum value of $P_n$ per unit of energy for selling its surplus and the choice of price $p_n$ is determined by 1) the surplus $e_n$ available to EU $n$, 2) EU's sensitivity $\alpha_n$ to the choice of price $p_n$, and 3) finally, the budget available to the SFC such that \eqref{eqn:budget-price} is satisfied.

Now, to determine the energy trading parameter $\mathbf{p} = [p_1, p_2,\hdots, p_n, \hdots, p_N]$, each EU $n$ interacts with the SFC to decide on the price $p_n$ per unit of energy that it wants to charge the SFC for selling its energy $e_n$ with a view to maximizing its benefit. To capture the benefit to each EU $n$, we propose a benefit function\footnote{Also known as utility function and welfare function.} $X_n$. In standard game theoretic research, e.g., \cite{Book_Han:2014}, the benefit function is an input to the game, whose outcome needs to be a real number, and illustrates the change in benefits corresponding to the change of a player's choice of action or environmental parameters. Note that benefit functions can be a combination of parameters with different dimensions and units to capture the effects of the change of parameters by the players. For instance, in \cite{Basar:2002}, the authors consider a utility function, which is a combination of transmission rate and the cost of transmission in order to show how different choices of price and transmission rate can affect the benefits to the respective player. In \cite{chaibo-TSG:2014}, the authors use a welfare function for their game, in which the welfare function is a combination of total cost of energy trading and the square of the amount of energy traded by the player. Further, a non-cooperative game is proposed in \cite{Yunpeng:2014}, in which a utility function is combining the cost of energy and the quantity of energy to be sold.

Now, we propose a benefit function $X_n$, which is based on a linearly decreasing marginal benefit,
\begin{eqnarray}
\tilde{X}_n = P_n - \alpha_n p_n - e_n,\label{eqn:marginal-utility}
\end{eqnarray}
contingent on the following assumptions:
\begin{enumerate}
\item Each EU $n$ is a rational entity and wants to choose a price $p_n$ per unit of energy as close to the maximum possible price $P_n$, e.g., equal to the grid's selling price, as possible.
\item EU $n$ is sensitive to its choice of price per unit of energy through the parameter $\alpha_n$, and thus the choice of $p_n$ is restricted by the choice of $\alpha_n$.
\item An EU with large surplus of energy has relatively lower marginal benefit compared to the EUs with lower surplus for the same choice of price~\cite{Tushar-globecom:2014}.
\end{enumerate}
Note that a utility function with decreasing marginal benefit is shown in \cite{Marginal_1:1998} to be appropriate for energy users. Then, the same property is also used to design utility functions in \cite{Tushar-TSG:2013}, \cite{Wayes-J-TSG:2012}, \cite{Samadi-C-Smartgridcomm:2010}, and \cite{chaibo-TSG:2014}, where the players participate in games for making decisions on energy trading parameters including price and energy. To this end, the chosen utility function in \eqref{eqn:utility-function} also possesses the property of declining marginal benefit and is close to the utility functions used in \cite{Tushar-TSG:2013}, \cite{Wayes-J-TSG:2012}, and \cite{chaibo-TSG:2014}. Please note that the authors modeled a two-level game in \cite{chaibo-TSG:2014} and Stackelberg games in \cite{Tushar-TSG:2013} and \cite{Wayes-J-TSG:2012} for designing energy management for smart grid. Therefore, a similar property is used to model the utility function of the proposed game.

To that end, the net benefit $X_n$ that an EU $n$ can attain from selling its surplus $e_n$ to the SFC at a price $p_n$ can be defined as
\begin{eqnarray}
X_n(p_n) &=& \int_0^{p_n}\!\tilde{X}_n\mathrm{d}q_n\nonumber\\&=&P_np_n - \frac{\alpha_n}{2}{p_n}^2-e_np_n.\label{eqn:utility-function}
\end{eqnarray}
The benefit function in \eqref{eqn:utility-function} is a quadratic function of $p_n$, which leverages the generation of discriminate pricing~\cite{Tushar-globecom:2014} between different EUs of the system. As can be seen in \eqref{eqn:utility-function}, the proposed benefit function possesses the following properties:
\begin{enumerate}
\item The benefit function is a concave function of $p_n$, i.e., $\frac{\delta^2X_n}{{\delta p_n}^2}<0$. Hence, the benefit of an EU may decrease for an excessively high $p_n$. This models the fact that the if the price is very high, the SFC may restrain from buying energy from the EU, which will eventually reduce its expected benefit from energy trading.
\item The benefit function $X_n$ is an increasing function of $P_n$ and a decreasing function of $\alpha_n$. That is $\frac{\delta X_n}{{\delta P_n}}>0$ and $\frac{\delta X_n}{{\delta \alpha_n}}<0$. Therefore, a EU with higher sensitivity will be prone not to change its selling price $p_n$ significantly.
\end{enumerate}
A graphical representation of the properties of the benefit function $X_n$ is shown in Fig.~\ref{fig:UtilityProperty}.
\begin{figure}[t]
\centering
\includegraphics[width=\columnwidth]{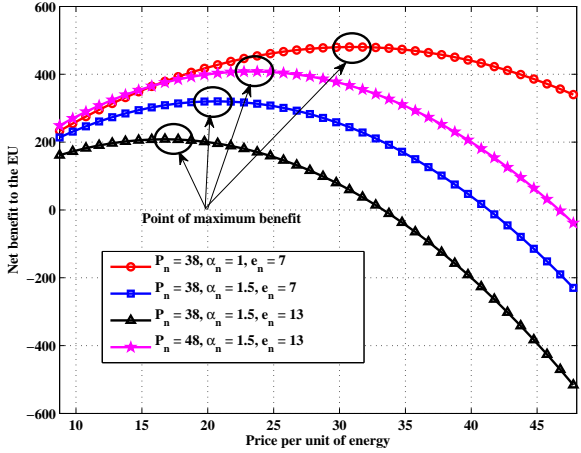}
\caption{Demonstration of the effects of different parameters on the net benefit achieved by an EU by trading its energy with the SFC. As can be seen from the figure, a higher sensitivity $\alpha_n$ leads the EU $n$ to achieve its maximum net benefit at a lower $p_n$. Also, as the surplus of an EU decreases, it achieves its maximum net benefit at a relatively higher price compare to other EUs.} \label{fig:UtilityProperty}
\end{figure}

It is important to note that for sustainable energy trading in a smart grid system the overall system benefits from trading needs to be profitable. Otherwise, EUs with significantly lower revenue from energy trading would not participate in such trading~\cite{Tushar-TSG:2013}, which will eventually diminish the energy trading market. In this context, the objective of each EU $n$ in the system is to interact with the SFC in order to determine a price $p_n$ per unit of energy, within the budget $C$ in \eqref{eqn:budget-price} of the SFC, so as to sell its energy surplus $e_n$ such that the sum $\sum_n X_n$ over all EUs in the system is maximized. Mathematically, the objective can be expressed as
\begin{eqnarray}
\max_{p_n} X(\mathbf{p}) = \max_{p_n}\left[\sum_n\left(P_np_n - \frac{\alpha_n}{2}{p_n}^2-e_np_n\right)\right], \label{eqn:objective-EU}
\end{eqnarray}
such that $\sum_n e_np_n\leq C$. Now, to explore how each EU $n$ in the proposed system can identify a price $p_n$ within the budget $C$ in \eqref{eqn:budget-price} such that their objective in \eqref{eqn:objective-EU} can be attained, we propose a CCG in the next section.

We stress that the proposed problem can also be solved by using other techniques such as dual decomposition or other centralized schemes. In order to solve the problem in a distributed fashion and thus allow the EUs  to maintain their privacy (i.e., not to disclose private information like $\alpha_n$ and $P_n$ to  the SFC and other EUs in the network), we choose to use a CCG over other centralized techniques. As discussed in \cite{privacy_2012}, sharing such private information not only allows the SFC to control the EUs' energy usage behavior, but also enables the SFC to access  the EUs' private lifestyles, which is a significant privacy concern at present. Hence, certain information, such as $\alpha_n$ (the sensitivity of a user to the change of price), needs to be kept private and should not be shared with the SFC. In addition, although the SFC may solve the proposed problem in a centralized manner if it has access to the private information, the EUs may be concerned that the SFC could modify $\alpha_n$, and thus distribute the total budget in favor of the EUs that have good relationships with the SFC, which is especially possible if the EUs and the SFC do not trust each other. Therefore, considering the above-mentioned factors, we choose to use a distributed technique such as the proposed CCG that leads to a solution with desirable properties like social optimality and being envy-free, while at the same time preserving the privacy of the EUs.

\section{Cake Cutting Game}\label{sec:game}
\subsection{Brief Background}\label{sec:Background}
CCG is a branch of game theory that deals with the division of some finite pool of resources in a way that meets certain valuation criteria or objectives of the players splitting the resource~\cite{Batral-2014}. Formally, the cake can be represented as a convex set, which is the total budget $C\subset\mathbb{R}$ of the SFC in this proposed case. Each player $n$ will receive an allocation $p_ne_n$ of this $C$ by choosing a suitable $p_n$ with the property $p_ne_n\subseteq C$. Now, before proceeding to the design of the proposed game, first we discuss some key properties of a CCG~\cite{Batral-2014}, which are relevant to the proposed scheme as follows.
\begin{enumerate}
\item The allocation vector $\mathbf{p^*} = [p_1^*, p_2^*, \hdots, p_N^*]$, which contains the outcome received by each player $n\in\mathcal{N}$ via distributing the cake $C$ is \emph{complete} if
\begin{eqnarray}
\sum_n e_n p_n^* = C.\label{eqn:prop-complete}
\end{eqnarray}
\item An allocation is called a \emph{socially optimal} allocation if the allocation has the property
\begin{eqnarray}
X (\mathbf{p}^*)\geq X(\mathbf{p}),\label{eqn:prop-social-optimal}
\end{eqnarray}
where $\mathbf{p} = [p_1^*, p_2^*,\hdots, p_n,\hdots,p_N^*]$ for any $n\in\mathcal{N}$.
\item An allocation $\mathbf{p}^*$ possesses the property of \emph{Pareto optimality} if no player with an allocation $p_{n1}^*$ can be better off with a share $p_{n1}$ without hurting at least one other player. Mathematically, an allocation with $p_{n1}^*$ and $p_{n2}^*$ is Pareto optimal, if there exists no other allocation containing $p_{n1}$ and $p_{n2}$ such that
\begin{eqnarray}
X_{n1}(p_{n1}^*)\leq X_{n1}(p_{n1})\wedge X_{n2}(p_{n2}^*)\leq X_{n2}(p_{n2}),\nonumber\\~\forall n1 = 1,2,\hdots,N,~\exists~n2 = 1, 2, \hdots, N; p_{n1}^*,p_{n2}^*\in\mathbf{p}^*. \label{eqn:prop-pareto}
\end{eqnarray}
\item If a \emph{complete allocation} $\mathbf{p}^*$ of a CCG is socially optimal, it is also Pareto optimal~\cite{Batral-2014}.\label{property:4}
\end{enumerate}
\subsection{Proposed Game}\label{sec:Proposed Game}
To decide on the energy trading parameter $p_n$, each EU in the smart community interacts with the SFC, and we propose a CCG $\Pi$ to capture this interaction. In the proposed CCG $\Pi$, each EU $n$ decides on its selling price $p_n$ through the considered game and offers the price to the SFC. The SFC, on the other hand, compares the received price vector $\mathbf{p}$ from all the EUs and decides whether the total expense is within its budget $C$, i.e., if the expense satisfies \eqref{eqn:budget-price}, and thus decides the solution of the game. Formally. the proposed CCG $\Pi$ can be defined by its strategic form as\footnote{Since the EUs do not have storage facilities, all EUs with energy surplus will be interested in participating in the game to make revenue as long as the offered price is more than the grid's buying price. Hence, the proposed CCG falls within the example of game theoretic problems where competitive buyers participate in the game to increase their utilities, e.g., studies in \cite{Wayes-J-TSG:2012} and \cite{Samadi-C-Smartgridcomm:2010}, \cite{chaibo-TSG:2014}.}.
\begin{eqnarray}
\Pi = \{\mathcal{N},\{\mathbf{P}_n\}_{n\in\mathcal{N}}, X\},\label{eqn:CCG}
\end{eqnarray}
where $\mathcal{N}$ is the set of players, i.e., EUs, in the game and $\mathbf{P}_n$ is the set of strategies of EU $n$, i.e., $p_n\in\mathbf{P}_n,~\forall n$, that satisfy \eqref{eqn:budget-price}. In \eqref{eqn:CCG}, the choice of each EU $p_n\in\mathbf{P}_n$ affects the choice of other EUs in choosing their suitable selling price due to the presence of \eqref{eqn:budget-price}. Hence, the proposed CCG $\Pi$ can be considered as a \emph{variational inequality} problem $\Gamma$~\cite{VIP:2007}, in which the choice of strategies of multiple players are coupled through \eqref{eqn:budget-price}. Hence, variational equilibrium, which is the solution concept of variational inequality problems, can be presumed as the solution of the proposed CCG $\Pi$. For the rest of the paper, we will use the term \textbf{cake cutting solution (CCS)} to refer to a variational equilibrium of the proposed CCG $\Pi$.
\begin{definition}
Consider the CCG $\Pi$ in \eqref{eqn:CCG}, in which $X$ is defined by \eqref{eqn:objective-EU}. A set of strategies $\mathbf{p}^*$ constitutes the CCS of the proposed $\Pi$ if and only if the strategy set satisfies the following inequalities:
\begin{eqnarray}
X(p_n^*,\mathbf{p}_{-n}^*)\geq X(p_n,\mathbf{p}_{-n}^*),\nonumber\\\forall n\in\mathcal{N}, p_n\in\mathbf{P}_n~\forall n,~\sum_n e_n p_n\leq C,\label{eqn:CCS}
\end{eqnarray}
where $\mathbf{p}_{-n} = \left[p_1, p_n,\hdots, p_{n-1}, p_{n+1}, \hdots, p_N\right]$.
\end{definition}
\subsection{Properties of CCS}\label{sec:Properties of The Game}
In this section, we investigate the properties of the CCS. In particular, we determine whether there exists a socially optimal CCS of the proposed CCG $\Pi$. Essentially, a socially optimal solution maximizes the total benefit to all the EUs in the smart grid, and thus is suitable for allocation of $C$ in order to maximize the overall system benefit.
\begin{theorem}
There exists a socially optimal CCS of the proposed CCG $\Pi$ between the EUs in the smart grid.
\label{theorem:1}
\end{theorem}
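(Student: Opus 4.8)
The plan is to exhibit a single profile $\mathbf{p}^\star$ that simultaneously maximizes the social welfare $X$ over the coupled strategy set and satisfies the equilibrium inequalities \eqref{eqn:CCS}, so that it is at once a CCS and socially optimal in the sense of \eqref{eqn:prop-social-optimal}. First I would make the feasible region explicit: let $\mathcal{P}=\{\mathbf{p}\in\mathbb{R}^N:\,0\le p_n\le P_n\ \forall n,\ \sum_n e_n p_n\le C\}$ be the set of price profiles consistent with \eqref{eqn:budget-price}. This set is nonempty (the profile $\mathbf{p}=\mathbf{0}$ satisfies the budget), convex (it is the intersection of the box $\prod_n[0,P_n]$ with the half-space $\sum_n e_n p_n\le C$), and compact (closed and bounded, the bound coming from $p_n\le P_n$). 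Next I would record the two structural facts about the objective in \eqref{eqn:objective-EU}: (i) $X(\mathbf{p})=\sum_n\bigl(P_np_n-\tfrac{\alpha_n}{2}p_n^2-e_np_n\bigr)$ is continuous, and since $\alpha_n>0$ for every $n$ its Hessian $-\mathrm{diag}(\alpha_1,\dots,\alpha_N)$ is negative definite, so $X$ is strictly concave; (ii) because each $X_n$ depends on $p_n$ alone, $X$ is an \emph{exact potential} for $\Pi$, equivalently the operator driving the variational-inequality reformulation $\Gamma$ is precisely $-\nabla X$.

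Second, I would establish existence of the maximizer. By the Weierstrass extreme-value theorem a continuous function attains its supremum on a nonempty compact set, so there is $\mathbf{p}^\star\in\mathcal{P}$ with $X(\mathbf{p}^\star)=\max_{\mathbf{p}\in\mathcal{P}}X(\mathbf{p})$, and strict concavity makes $\mathbf{p}^\star$ the unique such point. Equivalently, one may invoke the Hartman--Stampacchia existence theorem for the VI $\Gamma$ over the compact convex set $\mathcal{P}$ to obtain a CCS directly, and then identify it with the welfare maximizer using fact (ii).

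Third, I would verify that this $\mathbf{p}^\star$ has both properties claimed. Social optimality in the sense of \eqref{eqn:prop-social-optimal} is immediate, since $\mathbf{p}^\star$ maximizes $X$ over all of $\mathcal{P}$, hence in particular over single-coordinate deviations. For the CCS conditions \eqref{eqn:CCS}: fix any $n\in\mathcal{N}$ and any $p_n$ for which the profile $(p_n,\mathbf{p}_{-n}^\star)$ remains feasible, i.e., $0\le p_n\le P_n$ and $e_np_n+\sum_{m\ne n}e_mp_m^\star\le C$; then $(p_n,\mathbf{p}_{-n}^\star)\in\mathcal{P}$, so $X(p_n^\star,\mathbf{p}_{-n}^\star)=X(\mathbf{p}^\star)\ge X(p_n,\mathbf{p}_{-n}^\star)$. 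Hence $\mathbf{p}^\star$ satisfies \eqref{eqn:CCS} and is therefore a CCS. If in addition the budget binds at $\mathbf{p}^\star$, Property~\ref{property:4} then yields Pareto optimality as well, which is the content of the subsequent result and need not be invoked here.

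The step I expect to be the real obstacle is the one hidden inside fact (ii): arguing that the variational-equilibrium solution concept the paper attaches to $\Pi$ is the \emph{welfare-maximizing} generalized equilibrium rather than an arbitrary, possibly inefficient one. This is exactly where the separable/potential structure is essential --- it makes the game's pseudo-gradient equal to $-\nabla X$, so the first-order (KKT) conditions of the VI $\Gamma$ coincide with those of the concave program $\max_{\mathbf{p}\in\mathcal{P}}X(\mathbf{p})$, and concavity of $X$ together with convexity of $\mathcal{P}$ upgrades those conditions from necessary to sufficient. A secondary point to treat carefully is the bookkeeping around the notation $\mathbf{P}_n$, which already encodes the coupling \eqref{eqn:budget-price}; one must check that the comparison class in \eqref{eqn:CCS} is precisely the set of feasible unilateral deviations, as used above. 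Everything else --- continuity, compactness, the sign of the Hessian --- is routine.
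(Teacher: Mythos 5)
Your argument is correct, and it rests on the same underlying fact as the paper's proof --- that because each $X_n$ depends only on $p_n$, the game's pseudo-gradient is exactly $-\nabla X$ with Jacobian $\mathrm{diag}(\alpha_1,\dots,\alpha_N)$, so strict monotonicity of the operator and strict concavity of the welfare $X$ are two views of the same condition $\alpha_n>0$. The organization, however, runs in the opposite direction. The paper argues VI-first: it computes the Jacobian \eqref{eqn:Jecobean-Z}, concludes strict monotonicity, cites the VI literature for a unique CCS, and then asserts that the joint constraint \eqref{eqn:budget-price} makes that CCS the global maximizer of \eqref{eqn:objective-EU}. You argue optimization-first: compactness and continuity give a maximizer of $X$ over the coupled feasible set by Weierstrass, strict concavity gives uniqueness, and then you check directly that a global maximizer survives all feasible unilateral deviations and hence satisfies \eqref{eqn:CCS}. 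Your route buys two things the paper leaves implicit: (i) an explicit existence argument --- strict monotonicity alone yields only uniqueness of a VI solution, and existence needs compactness or coercivity of the feasible set, which you supply by noting the budget (or the caps $p_n\le P_n$) bounds $\mathcal{P}$; and (ii) a justification for why the variational equilibrium is the \emph{welfare-maximizing} one rather than an arbitrary generalized equilibrium, namely the separable/potential structure that makes the KKT systems of the VI and of the concave program coincide --- precisely the step the paper compresses into a citation. The paper's route is shorter and plugs directly into the machinery it later reuses (the KKT conditions and $\tau_n>0$ for completeness of the allocation), but your version is the more self-contained and more careful proof of the same statement.
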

\begin{proof}
First we note that the proposed CCG $\Pi$ is a variational inequality problem $\Gamma$ as we have mentioned earlier. Therefore, the CCG $\Pi$ can be defined as $\Gamma(\mathbf{P, Z})$, which can be used to determine a vector $\mathbf{p}\in\mathbf{P}\in\mathbb{R}^n$ such that $\langle\mathbf{Z}(\mathbf{p}^*),\mathbf{p}-\mathbf{p}^*\rangle\geq 0,~\forall \mathbf{p}\in\mathbf{P}$~\cite{Agranda:2011}. Here,
\begin{eqnarray}
\mathbf{Z} = -\left(\nabla_{p_n}X_n(p_n)\right)_{n\in\mathcal{N}},\label{eqn:defin-Z}
\end{eqnarray}
and $\mathbf{P}$ is the vector of all strategies of all EUs in the network. The solution of the $\Gamma(\mathbf{P, Z})$ is the CCS. Now, the pseudo-gradient of the benefit function \eqref{eqn:objective-EU} is~\cite{Agranda:2011}
\begin{eqnarray}
\mathbf{Z} = \begin{bmatrix}
\alpha_1p_1 + e_1 - P_1\\
\alpha_2p_2 + e_2 - P_2\\
\vdots\\
\alpha_Np_N + e_N - P_N
\end{bmatrix},
\label{eqn:definition-Z}
\end{eqnarray}
whose Jacobean is
\begin{eqnarray}
\mathbf{JZ} = \begin{bmatrix}
\alpha_1 & 0 & \cdots & 0\\
0 & \alpha_2 & \cdots & 0\\
\vdots & \vdots & \cdots & \vdots\\
0 & 0 & \cdots & \alpha_N
\end{bmatrix}.
\label{eqn:Jecobean-Z}
\end{eqnarray}
In \eqref{eqn:Jecobean-Z}, $\alpha_n$ for $n = 1, 2, \hdots, N$ is always positive. Now, by considering the $n^\text{th}$ leading principal minor $\mathbf{JZ}_n$ of the leading principal sub-matrix\footnote{The $i^\text{th}$ order principal sub-matrix $\mathbf{A}_i$ can be obtained by deleting the last $g-i$ rows and last $g-i$ columns from a $g\times g$ matrix $\mathbf{A}$.} $\mathbf{JZ}$, it can be shown that $\mathbf{JZ}_n$ is always positive, i.e., $|\mathbf{JZ}_1|>0,~|\mathbf{JZ}_2|>0$ and so on. Therefore, $\mathbf{JZ}$ is positive definite on $\mathbf{P}$, and thus $\mathbf{Z}$ is strictly monotone. Hence, $\Gamma(\mathbf{P, Z})$ possesses a unique CCS~\cite{VIP:2007}. Furthermore, due to the presence of the joint constraint \eqref{eqn:budget-price}, the CCS is also the unique global maximizer of \eqref{eqn:objective-EU}~\cite{VIP:2007}, which subsequently proves that the CCS is the \emph{socially optimal} solution of the proposed CCG $\Pi$.
\end{proof}

Whereas the CCG $\Pi$ is shown to have a socially optimal solution, in order to divide the budget $C$ among the EUs in an efficient manner it is also necessary that the budget should be Pareto optimal. We note that in a Pareto efficient allocation no EU can change its strategy without hurting at least one another EU in the network. Therefore, if the allocation is both socially and Pareto optimal, the allocation of price per unit of energy between different EUs will be fair and envy-free. To this end, first we note that the social optimality of the CCS has already been shown in Theorem~\ref{theorem:1}. Therefore, according to Property~\ref{property:4} in Section~\ref{sec:Background}, demonstrating the existence of a complete allocation of price between the EUs will subsequently establish the Pareto optimality of the CCS.

Now, due to the coupled constraint \eqref{eqn:budget-price}, the Karush-Kuhn-Tucker (KKT) condition, using the method of Lagrange multipliers~\cite{Bertsekas:1995}, for the EU $n$'s choice on $p_n$ in \eqref{eqn:objective-EU} can be defined as~\cite{VIP:2007}
\begin{eqnarray}
P_n - \alpha_n p_n - e_n -\tau_n = 0, \tau_n\geq 0,\label{eqn:lagran-1}
\end{eqnarray}
and
\begin{eqnarray}
\tau_n\left(\sum_n e_n p_n-C\right) = 0,~\forall n.\label{eqn:lagran-2}
\end{eqnarray}
Here, $\tau_n$ is the Lagrange multiplier for EU $n$. It is important to note that if any strictly monotone variational inequality problem constitutes a CCS such as the proposed case (according to Theorem~\ref{theorem:1}), the multiplier $\tau_n,~\forall n$ possesses the property $\tau_n>0,~\forall n$~\cite{VIP:2007}. As a consequence, it is clear from Theorem~\ref{theorem:1} and \eqref{eqn:lagran-2} that for the proposed CCG $\Pi$ the total allocation of budget between the EUs is equal to $C$, i.e., $\sum_n p_ne_n = C$. Thus, the allocation is \emph{complete}, and consequently the CCS of the proposed CCG $\Pi$ is Pareto optimal.
\begin{remark}
Since, the solution of the CCG $\Pi$ possesses a solution, which is both socially and Pareto optimal, the allocation of price per unit of energy between different EUs will be fair and envy-free\footnote{Such an envy-free property allows each EU to trade its surplus energy with the SFC without envying other EUs in the network for the payments they receive, and thus ensures market transparency despite price discrimination and consequently enables such an energy trading market to be sustained.}.
\end{remark}

Another important characteristic of the decision making process of a EU $n$ concerning its choice of a price $p_n$ per unit of energy can be explained from \eqref{eqn:lagran-1}. Since, $\tau_n> 0~\forall n$,
\begin{eqnarray}
p_n<\frac{P_n - e_n}{\alpha_n}. \label{eqn:chosen-price}
\end{eqnarray}
Thus, an EU with higher energy surplus and/or higher sensitivity to the choice of price needs to choose a relatively smaller price per unit of energy compared to EUs with lower surplus and/or lower sensitivity in order to reach the socially and Pareto optimal CCS if the proposed CCG $\Pi$ is adopted.
\subsection{Algorithm}\label{sec:Algorithm}
Now, to design an algorithm for the EUs to reach the desired solution of the CCG $\Pi$, we first note that the proposed game is a strictly monotone variational inequality problem. Therefore, the CCS can be attained by solving the game through an algorithm, which is suitable for solving a monotone variational inequality problem. To this end, we propose to use the S-S method~\cite{S-S:1999}, which is shown to be effective to solve monotone variational inequality problems in \cite{Tushar-TSG:2013}, to solve the proposed CCG $\Pi$. Essentially, the S-S method used in this paper is a hyperplane projection method that requires two-way communications between EUs and the SFC in the network to reach the CCS. A geometrical interpretation is used where two projections per iteration are required. For instance, let $p^t$ be the current approximation of the solution of $\Gamma(\mathbf{P, Z})$. Then, first the projection $r(p^t) = \text{Proj}_{\mathbf{P}}[p^t - \mathbf{Z}(p^t)]$ is computed\footnote{Projection $\text{Proj}_{\mathbf{P}}[k] = \arg\min\{||w-k||,~w\in\mathbf{P}\},~\forall k\in\mathbb{R}^n$~ \cite{Tushar-TSG:2013}.} and a point $z^t$ is searched for in the line segment between $p^t$ and $r(p^t)$ such that the hyperplane $\partial H  \triangleq \{p\in\mathbb{R}|\langle\mathbf{Z}(k^t),p-z^t\rangle=0\}$ strictly separates $p^t$ from the solution $\mathbf{p}^*$ of the problem. Then, once $\partial H$ is constructed, $p^{t+1}$  is computed in iteration $t+1$ by projecting onto the intersection of the feasible set $\mathbf{P}$ with hyperspace $H^t\triangleq\{p\in\mathbb{R}|\langle\mathbf{Z}(k^t),p-z^t\rangle\leq0\}$, which contains the solution set.
\begin{algorithm*}[t]
\caption{Algorithm to reach the CCS of the proposed CCG $\Pi$.}
\centering
\includegraphics[width=0.8\textwidth]{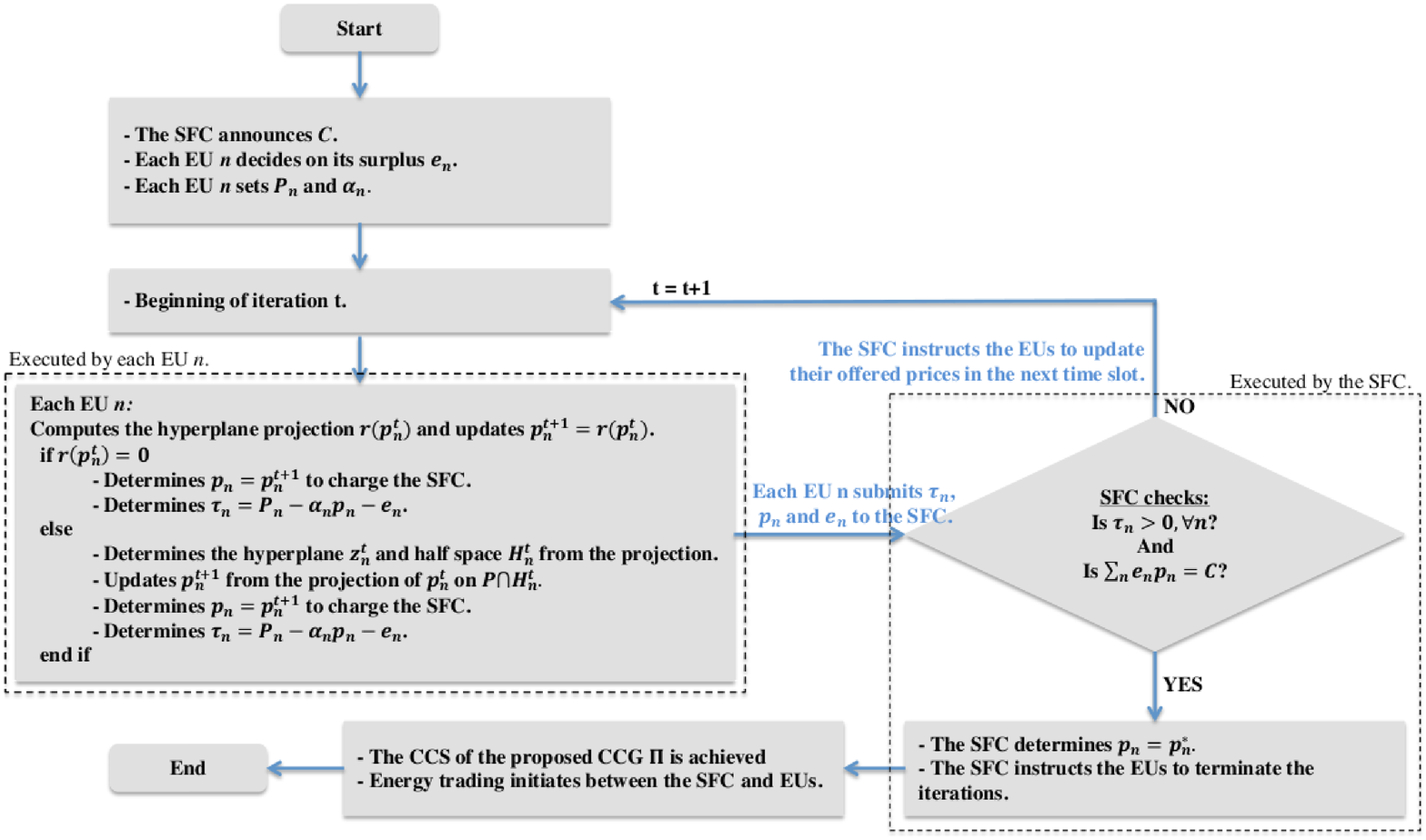}
\label{algorithm:1}
\end{algorithm*}

The proposed algorithm is initiated with the announcement of a total budget by the SFC to buy electricity from the connected EUs. The SFC can set the  budget using any statistical technique such as a Markov chain model based on historical budget data sets~\cite{Shamshad-Energy:2005}. Upon receiving the information about the budget and determining its own requirement, each EU decides the amount of energy that it wants to sell and submits it to the SFC. Once, the budget is set and the surplus of each EU is determined, all the EUs participate in the proposed CCG $\Pi$ through Algorithm~\ref{algorithm:1} and reach the optimal price vector, i.e., the CCS $\mathbf{p}^*$, for selling their surplus energy to the SFC. The details of the algorithm in shown in \ref{algorithm:1}.
\begin{proposition}
The algorithm proposed in Algorithm~\ref{algorithm:1} is always guaranteed to reach the CCS of the proposed CCG $\Pi$.
\label{proposition:1}
\end{proposition}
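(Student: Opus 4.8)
The plan is to reduce the claim to the known convergence guarantee of the S-S hyperplane projection method~\cite{S-S:1999}, after checking that the proposed CCG $\Pi$ satisfies the hypotheses under which that guarantee applies. By Theorem~\ref{theorem:1}, $\Pi$ is equivalent to the variational inequality $\Gamma(\mathbf{P},\mathbf{Z})$ with $\mathbf{Z}$ given by \eqref{eqn:definition-Z}, and this problem admits a unique solution $\mathbf{p}^*$, the CCS; in particular its solution set is nonempty. I would therefore first record the three structural facts the S-S method needs: (i) the feasible set $\mathbf{P}$ is closed and convex --- it is the intersection of the box constraints $p_n\in[0,P_n]$ with the single linear budget constraint \eqref{eqn:budget-price} --- so the Euclidean projection $\text{Proj}_{\mathbf{P}}[\cdot]$ is well defined and single-valued; (ii) $\mathbf{Z}$ is affine in $\mathbf{p}$, hence Lipschitz continuous on $\mathbf{P}$; and (iii) $\mathbf{Z}$ is monotone on $\mathbf{P}$, indeed strictly monotone, because its Jacobian \eqref{eqn:Jecobean-Z} was shown to be positive definite in the proof of Theorem~\ref{theorem:1}.

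Next I would verify step by step that Algorithm~\ref{algorithm:1} is an instance of the S-S scheme: from the current iterate $p^t$ it forms $r(p^t)=\text{Proj}_{\mathbf{P}}[p^t-\mathbf{Z}(p^t)]$, performs the line search along the segment between $p^t$ and $r(p^t)$ to obtain $z^t$ at which the hyperplane $\partial H=\{p\mid\langle\mathbf{Z}(k^t),p-z^t\rangle=0\}$ strictly separates $p^t$ from every solution, and then sets $p^{t+1}=\text{Proj}_{\mathbf{P}\cap H^t}[p^t]$. Two points need attention: that the line search terminates after finitely many steps, which follows from continuity of $\mathbf{Z}$ together with the fact that $r(p^t)\neq p^t$ whenever $p^t$ is not already a solution (and if $r(p^t)=p^t$ the algorithm has already reached the CCS and halts); and that the constructed hyperplane is genuinely separating, i.e.\ $\langle\mathbf{Z}(z^t),\mathbf{p}^*-p^t\rangle<0\leq\langle\mathbf{Z}(z^t),\mathbf{p}^*-z^t\rangle$, which is precisely where monotonicity of $\mathbf{Z}$ enters.

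With these in place the convergence argument is the standard one: projecting onto $\mathbf{P}\cap H^t$ renders the iterates \emph{Fej\'er monotone} with respect to the nonempty, closed, convex solution set, so $\{\|p^t-\mathbf{p}^*\|\}$ is nonincreasing, $\{p^t\}$ is bounded, $\|p^{t+1}-p^t\|\to 0$, and every accumulation point of $\{p^t\}$ solves $\Gamma(\mathbf{P},\mathbf{Z})$; since Theorem~\ref{theorem:1} guarantees uniqueness of that solution, the whole sequence converges to the CCS $\mathbf{p}^*$. I expect the main obstacle to be not a single deep estimate but the bookkeeping of matching the algorithm's two-way-communication steps to the abstract S-S iteration and confirming the separation and Fej\'er-monotonicity inequalities hold verbatim in this coupled-constraint setting; once the hypotheses --- nonempty solution set from Theorem~\ref{theorem:1}, continuity and strict monotonicity of $\mathbf{Z}$, and convexity of $\mathbf{P}$ --- are checked, convergence is inherited directly from the S-S method.
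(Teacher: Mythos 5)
Your proposal is correct and follows essentially the same route as the paper: both reduce the claim to the known convergence guarantee of the S-S hyperplane projection method for (strictly) monotone variational inequality problems, invoking Theorem~\ref{theorem:1} to establish strict monotonicity of $\mathbf{Z}$ and nonemptiness/uniqueness of the solution set. The paper's own proof is a three-sentence citation of that guarantee, whereas you additionally spell out the standard supporting facts (closed convex feasible set, Lipschitz continuity of the affine $\mathbf{Z}$, the separating-hyperplane and Fej\'er-monotonicity steps) that the paper delegates entirely to the references; this is a more complete rendering of the same argument, not a different one.
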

\begin{proof}
To prove Proposition~\ref{proposition:1}, first we note that the S-S method is based on a hyperplane projection technique~\cite{Agranda:2011}, which is always guaranteed to converge to a non-empty solution if the variational inequality problem is strictly monotone~\cite{VIP:2007}. It is proven in Theorem~\ref{theorem:1} that the proposed CCG $\Pi$ is strictly monotone over the choice of $p_n~\forall n$. Therefore, the proposed Algorithm~\ref{algorithm:1} is guaranteed to always reach a non-empty CCS, and thus Proposition~\ref{proposition:1} is proved.
\end{proof}

\textbf{Note}: Please note that the proposed game is a static game and therefore does not consider the effect of the change of parameters across different time slots. Here, we consider a single time instant and keep the entire focus of the study on investigating how the considered energy trading scheme can be conducted in an envy-free environment so as to achieve a socially optimal solution by using the proposed price discrimination technique. Once such price discrimination is established, extending the proposed work to a time-varying environment is an interesting topic for future work. Note that the electricity price in real-time pricing schemes is decided differently at different times of the day based on the conditions of several parameters such as the demand, electricity generation, and the reserve of energy in the system. Therefore, the proposed scheme has the potential to be incorporated into such a real-time pricing scheme, in which the energy controller may decide to adopt the price discrimination at a particular time of the day, whenever it seems beneficial.

\section{Case Study}\label{sec:Case Study}
To show the effectiveness of the proposed scheme, we consider an example in which a number of EUs with energy surplus are participating in supplying energy to the SFC in a time of interest. The energy surplus of each EU to supply to the SFC is assumed to be a uniformly distributed random variable in the range $[3.6, 12.25]$ kWh~\cite{Tushar-TSG:2013}. The target per unit price $P_n$ for all $n\in\mathcal{N}$ is assumed to be $45$ cents, which is in fact equal to the grid's selling price~\cite{Tushar-TIE:2014}. This value is based on the rationality assumption of the EUs where each EU is willing to charge the SFC as much as the grid for selling its energy. The budget of the SFC is considered to be $1000$ cents, which is chosen to maintain the condition $p_ne_n\leq C<P_ne_n$ throughout the simulation process. Further, the choice of $C$ is also chosen such that the price $p_n$ for any EU $n$ does not go below the grid's buying price $p_{g,\text{buy}} = 8.00$ cents/kWh~\cite{Tushar-TIE:2014}. This is necessary to ensure that all EUs are interested in selling to the SFC instead of to the grid. The sensitivity parameter $\alpha_n~\forall n$ is chosen randomly from the range $[1,~3]$. Thus, a consumer with $\alpha_n = 1$ is least sensitive to its choice of price whereas consumers with sensitivity $\approx 3$ are considered to be strictly constrained to the price choice. Nevertheless, it is important to note that all parameters used in this study are particular for this example only and may vary according to the needs of the SFC, weather conditions, time of day/year and the energy policy of the particular country.
\begin{figure}[t]
\centering
\includegraphics[width=\columnwidth]{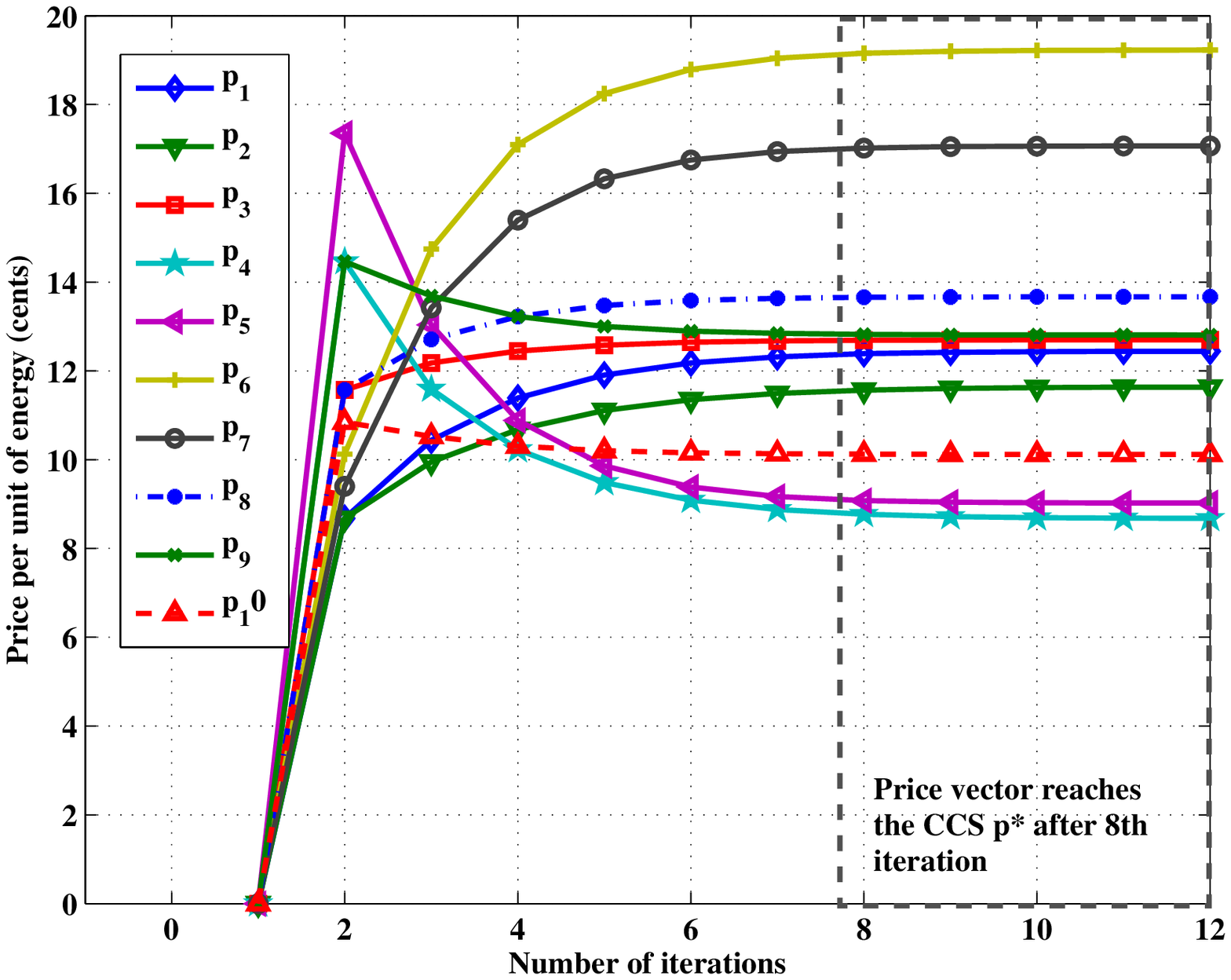}
\caption{Demonstration of the convergence of the proposed Algorithm~\ref{algorithm:1} to the CCS of the proposed CCG $\Pi$. It is noted from the figure that the price vector reaches its CCS after $8^\text{th}$ iterations.} \label{fig:figure-1}
\end{figure}

To that end, in Fig.~\ref{fig:figure-1}, we first show  how the proposed game with $10$ EUs converges to its CCS solution when we adopt Algorithm~\ref{algorithm:1}. Firstly, according to this figure, the choice of price of each of the participating EUs reaches its CCS after the $8^\text{th}$ iteration of the algorithm. Hence, the speed of the algorithm is reasonable. Secondly, different EUs determine different prices to pay to the SFC for energy trading, which is mainly due to the way that the pricing scheme is designed. Note that although the target price $P_n~\forall n$ per unit of energy is considered similar for all EUs, the energy surpluses available to them are different. Also, different EUs have differing sensitivity to the pricing policy. As a consequence, once the CCG $\Pi$ reaches the CCS, the socially optimal price vector constitutes a different price per of unit energy for each EU as demonstrated in Fig.~\ref{fig:figure-1}.
\begin{figure}[t]
\centering
\includegraphics[width=\columnwidth]{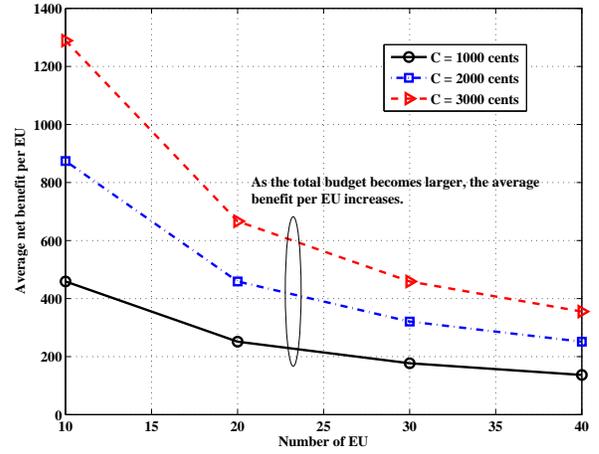}
\caption{Demonstration of the effect of the total budget of the SFC and the number of EUs that are sharing the budget on the average net benefit achieved by each EU participating in the proposed CCG $\Pi$.} \label{fig:figure-2}
\end{figure}

It is important to note that the outcome of the proposed CCG $\Pi$ is significantly affected by the size of the cake of the game, i.e., the total budget available to the SFC, and the number of participating EUs that are sharing the cake between them through choosing a suitable price per unit of energy. Now, to show the effect of the number of EUs that are sharing the budget of the SFC through the proposed CCG $\Pi$, we consider an example where four different of EUs are sharing the SFC's budget in order to maximize their net benefits by choosing a suitable price per unit of energy. The graphical representation of this considered study is shown in Fig.~\ref{fig:figure-2}. We change the number of EUs from $10$ to $40$ with an increment of $10$, and show the average net benefits achieved by each EU of each respective case in the figure. According to Fig.~\ref{fig:figure-2}, first we note that as the number of EUs increases, the average benefit achieved by each EU decreases due to sharing the same budget of the SFC. For instance, as the number of EUs increases from $10$ to $40$, the average benefit for sharing $1000$ cents by choosing a  suitable price per unit of energy decreases from $1250$ to $390$, which is around a  $68\%$ decrement. Essentially, more EUs taking part in energy management enables each EU to take a smaller share of the budget, which subsequently reduces their benefits in energy trading with the same total budget.
\begin{table}[t]
\caption{Demonstration of the effect of SFC's total budget on the participation rate of  EUs (with surplus energy) in energy trading. It is considered that if any EU does not receive a minimum payment of $p_{g,\text{buy}} = 8.00$ cents/kWh it does not trade its energy with the SFC. Thus, a lower budget for a large number of EUs subsequently decreases the percentage rate of the EUs' participation in energy trading with the SFC. Here, the actual numbers of EUs that participate in energy trading are shown whereby their percentages compared to the total EUs in the network are shown within the brackets.}
\centering
\includegraphics[width=\columnwidth]{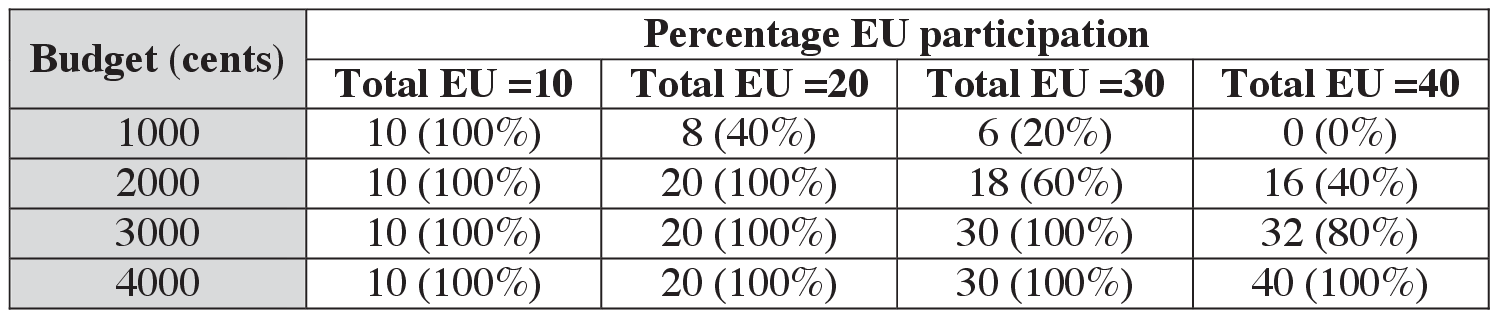}
\label{fig:table-budget}
\end{table}

Another interesting aspect that can be noticed from the effect of the SFC's budget on the overall energy trading scheme is that the same budget may not be suitable to encourage all the EUs from EU groups of different sizes to trade their energy with the SFC. In this regard, we show the participation rate of EUs from different sizes of EU groups in Table~\ref{fig:table-budget}. It is assumed that if the price per unit of energy that an EU charges the SFC for selling its energy falls below $p_{g,\text{buy}}$, which is considered to be $8.00$ cents/kWh for this particular case study, it does not participate in energy trading. This is due to the fact that, as explained in Section~\ref{sec:system-model}, the expected return from selling energy for the EU becomes very small. Now, according to Table~\ref{fig:table-budget}, for a similar budget $C$ of the SFC, the participation rate of EUs reduces considerably as the number of EUs in a group increases. For instance, for a budget of $1000$ cents, $100\%$ EU participation is observed from a group of $10$ EUs, whereby the participation rate decreases to $40\%$, $20\%$ and $0\%$ respectively for EU group size of $20, 30$ and $40$. Nonetheless, by increasing the budget $C$, the SFC can encourage more EUs to be involved in the energy trading with the SFC and can reduce its total cost of energy trading. Thus, as shown in Table~\ref{fig:table-budget}, the proposed scheme can essentially assist the SFC in deciding on its budget with a view to increase participation, if appropriate, with the SFC.
\begin{remark}
It is important to note from the above discussion in Fig.~\ref{fig:figure-2} and Table~\ref{fig:table-budget} that a suitable choice of the SFC's budget $C$  is critical to successful adaptation of energy management in the community through the proposed CCG $\Pi$ as this may possibly affect the average net benefit per EU as well as the total cost incurred by the SFC. The proposed scheme has the potential to assist the SFC in deciding on its budget in order to encourage more EUs, if feasible, to take part in energy trading with the SFC.
\end{remark}
\begin{table}[t]
\caption{Demonstration of the effect of the sensitivity parameter of EUs on their average net benefit. For this particular example, all $10$ EUs are considered to possess the same sensitivity parameter, i.e., all EUs have the same sensitivity to their chosen price per unit of energy. The average net benefit per EU is compared, i.e., percentage decrement as shown within brackets, with the case when all EUs possess $\alpha_n = 1,~\forall n$.}
\centering
\includegraphics[width=\columnwidth]{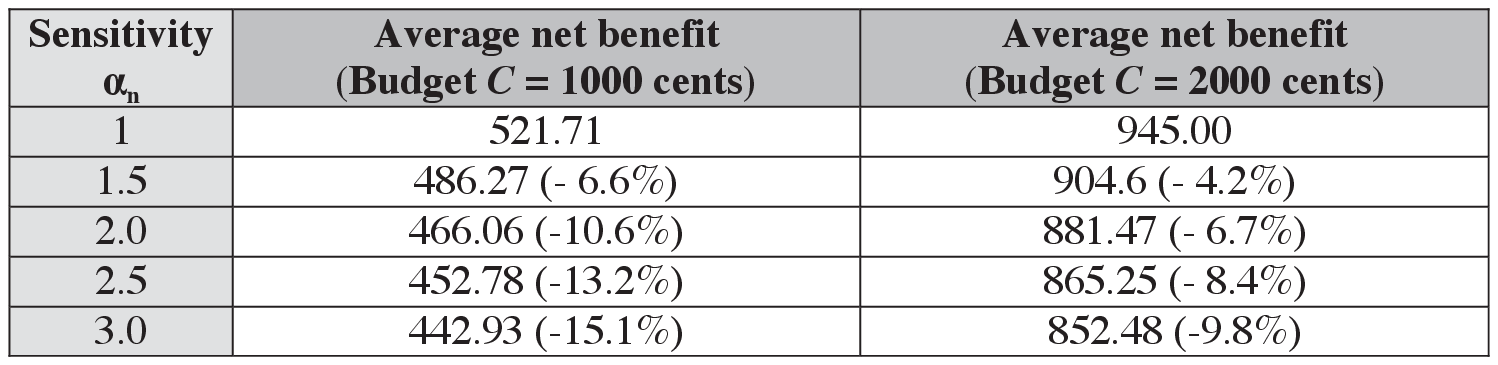}
\label{fig:table-1}
\end{table}

Furthermore, the choice of price and consequently the obtained net benefit by each EU is also affected by its sensitivity parameter $\alpha_n~\forall n$, which we show in Table~\ref{fig:table-1}. For this case, we assume that all EUs in the system are equally sensitive to their chosen price per unit of energy, i.e., they have the same $\alpha_n,~\forall n$, whereby their available surpluses to sell to the SFC are different as in previous examples. We consider five sensitivity parameters including $1.0, 1.5, 2.0, 2.5$ and $3.0$, where $1.0$ refers to the case when the EUs are insensitive to the choice of price per unit of energy and $3$ indicates maximum sensitivity of the EU. The average net benefit to the EU at $\alpha_n = 1$ is considered as the baseline and is used to compare how the average net benefit to the EUs varies as their sensitivity is altered in the system. As can be seen in Table~\ref{fig:table-1}, the average net benefit to the EUs decreases as the sensitivity increases. For a budget of $1000$ cents, for instance, as the sensitivity parameter increases from $1$ to $3$, the average net benefit to the EU reduces by $15.1\%$. In fact, as the sensitivity increases, the choice of price becomes more restricted for an EU, which consequently reduces its net benefit. A reduction in net benefit with increasing sensitivity is also observed for a budget of $2000$ cents. However, for the higher budget, we find two modifications in terms of achieved average net benefit per EU. Firstly, the benefit per EU increases as the budget of the SFC increases, which is explained in Fig.~\ref{fig:figure-2}. Secondly, the decrement of average net benefit per EU for different sensitivity parameters compared to $\alpha_n = 1$ is relatively lower for a larger budget. For example, as the sensitivity parameter increases from $1$ to $3$, the average net benefit per EU reduces to $9.8\%$ for a budget of $2000$ cents, whereas this reduction is $15.1\%$ for $1000$ cents. This is due to the fact that although $\alpha_n$ is considered the same for all EUs, the available surplus of each EU is different, which enables discriminate pricing. Now, based on the available surplus, each EU chooses its price per unit of energy, which increases for a larger budget (as shown in Fig.~\ref{fig:figure-2}). As a consequence, the difference between net benefits reduces at a higher budget compared to the case of a lower budget.

Having demonstrated some properties of the proposed discriminate pricing scheme, we now show how the proposed scheme can be beneficial to both the EUs and the SFC compared to the traditional case when both parties trade their energy with the grid. To show this comparison, we assume that the selling and buying prices per unit of energy set by the grid are $44$ and $8$ cents per kWh as discussed earlier in this section. To this end, we first consider in Table~\ref{fig:table-SFC-benefit} how much energy the SFC can buy within its budget if it only buys from the grid, compared to the case in which the SFC buys from the users through the proposed scheme. Then, in Table~\ref{fig:table-EU-benefit}, we show the total monetary benefit EUs can attain if they choose to sell their surplus to the SFC instead of selling the surplus to the grid.

\begin{table}[t]
\caption{Demonstration of the benefit to the SFC in terms of the amount of energy that the SFC can get from the proposed trading scheme compared to the case when the SFC only buys its energy from the grid for the same budget.}
\centering
\includegraphics[width=\columnwidth]{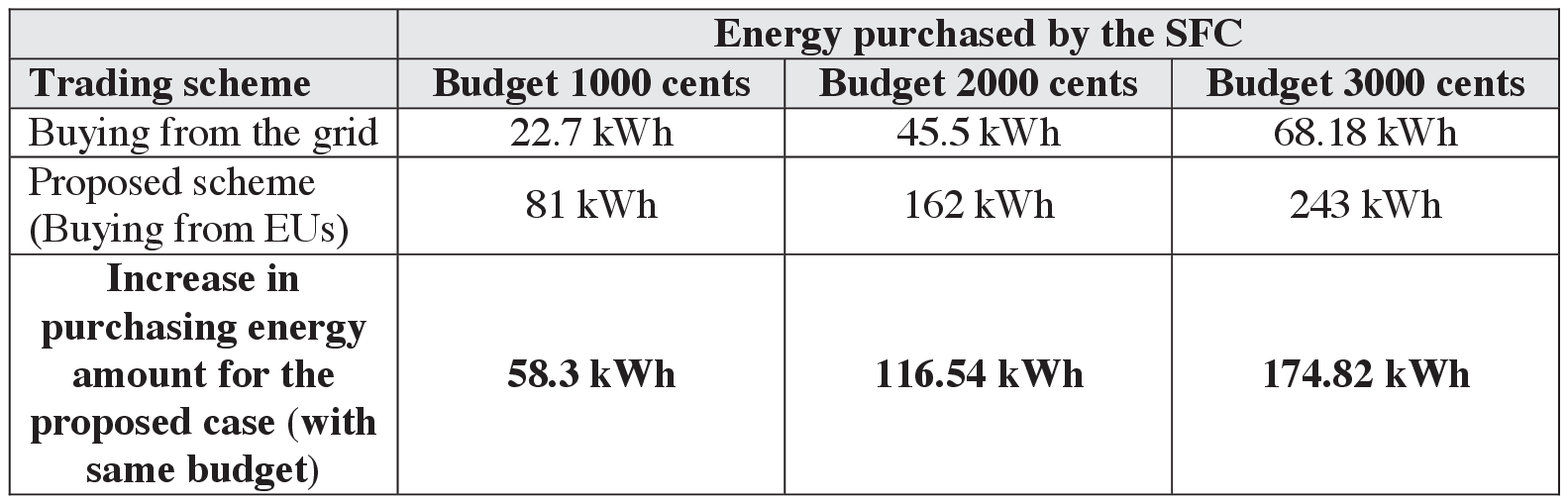}
\label{fig:table-SFC-benefit}
\end{table}

In Table \ref{fig:table-SFC-benefit}, we show the amount of energy that an SFC can buy from EUs if the proposed scheme is adopted as the SFC changes its budget from $1000$ cents to $2000$ and then $3000$ cents. We note that as the SFC increases its budget, more EUs would be interested in taking part in energy trading with the SFC (e.g., as shown in Table~\ref{fig:table-budget}), which subsequently increases the amount of energy that the SFC can obtain from the participating EUs. A similar increment in purchasing energy is also observed in the case when the SFC buys its energy from the main grid. However, for each budget, the energy that the SFC can buy from the EUs is considerably larger than the amount that the SFC can buy from the grid. For instance, for a budget of $1000$ cents, the SFC can buy $58.3$ kWh more energy through the proposed scheme compared to buying exclusively from the grid. This is due to the fact that the grid price is generally very high~\cite{McKenna-JIET:2013}. Hence, for a fixed budget $C$, the SFC can buy relatively smaller amounts of energy from the grid. As a consequence, the SFC manages to buy considerably more energy with the same budget. This phenomenon is observed for all considered SFC's budgets as shown in Table~\ref{fig:table-SFC-benefit}.
\begin{table}[t]
\caption{Demonstration of the benefits to the EUs in terms of total monetary revenue that the participating EUs receive when they trade their surplus energy with the SFC compared to trading energy with the grid.}
\centering
\includegraphics[width=\columnwidth]{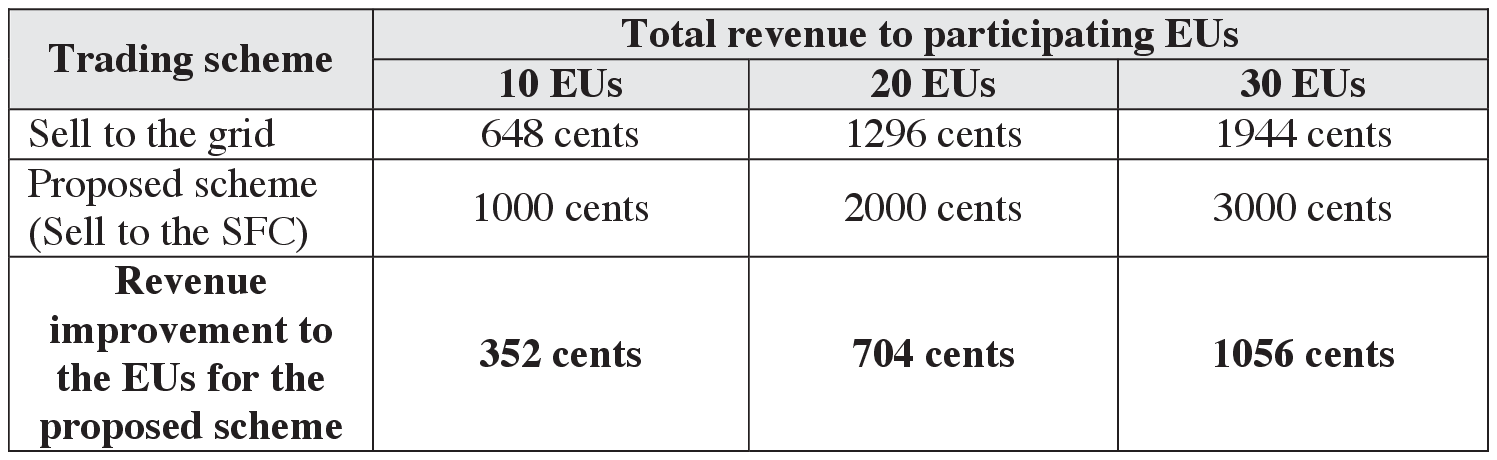}
\label{fig:table-EU-benefit}
\end{table}

In Table~\ref{fig:table-EU-benefit}, we show how the participating EUs in the proposed scheme can benefit in terms of total revenue that they can receive from trading their surplus energy to the SFC compared to selling to the grid. To this end, we first note that the proposed energy scheme through CCG $\Pi$ is \emph{complete} as discussed in Section~\ref{sec:Proposed Game}. Therefore, the total revenue that the participating EUs receive, for $10$, $20$ and $30$ EUs in the network,  becomes equal to the considered budgets of $1000$, $2000$, and $3000$ cents respectively. By contrast, as the EUs trade their energy with the grid the total revenue reduces significantly because of the lower buying price per unit of energy from the grid. For example, as $10$ EUs participating in the CCG $\Pi$, the total revenue that they receive is equal to the $1000$ cents budget of the SFC. Nonetheless, as the EUs trade their energy with the main grid, due to a lower per unit price of $8$ cents/kWh, the total revenue that the participating EUs receive reduces to $648$ cents, which is $352$ cents less than the proposed scheme for the considered system parameters. The performance improvement for the proposed case is better for higher numbers of EUs in the network.
\begin{remark}
It is clear from Table~\ref{fig:table-SFC-benefit} and Table~\ref{fig:table-EU-benefit} that the proposed scheme is beneficial for both the SFC and the participating EUs in the smart grid network for the system parameters considered in the given case studies. Hence, the proposed scheme has the potential to be adopted in practical systems in order to benefit all participating entities.
\end{remark}

\section{Conclusion}\label{sec:Conclusion}
This paper has demonstrated a viable method to discriminate price per unit of energy between different energy users in a smart grid system when the EUs sell their surplus energy to a shared facility controller. A cake cutting game has been proposed to leverage the generation of discriminate pricing within a constrained budget of the SFC. To study the fairness of the proposed scheme, it has been shown that the CCG can be modeled as a variational inequality problem that possesses the solution of the game, i.e., the cake cutting solution. The properties of the CCS have been studied and the existence of a socially optimal solution, which is also Pareto optimal, has been validated. An algorithm has been proposed that can be adopted by each EU interacting with the SFC in a distributed manner and the convergence of the algorithm to the optimal CCS has been confirmed. Finally, the properties of the game have been studied, and the advantages of discriminate pricing for both the SFC and the EUs have been demonstrated via simple comparisons with energy trading with the main grid.

An important extension of the proposed scheme would be to establish a relationship between the budget of the SFC and the total number of EUs participating in the energy trading with a view to enabling efficient price discrimination, e.g., by using an interactive Stackelberg game with imperfect information of the total budget. Another potential extension is to conduct studies that determine when such a discriminate pricing technique can be used as a real-time pricing scheme. In addition, extending the proposed scheme to a time-varying environment is another interesting topic for future work.

\end{document}